\renewcommand{\parallel}{{|\!|}}
\def\N{\mbox{I\hspace{-.15em}N}}
\newcommand\DCLPC[0]{{\textsf{DCL{-}PC}}\xspace}
\newcommand\DLPA[0]{{\textsf{DL{-}PA}}\xspace}
\newcommand\PDL[0]{{\textsf{PDL}}\xspace}
\newcommand{\Agt}     { \mathbb{A} }
\newcommand{\lbox}[1]{[ #1 ]}
\newcommand{\ldia}[1]{{\langle #1 \rangle}}
\newcommand{\limp}{\rightarrow}
\newcommand{\tuple}[1]{( #1 )}
\newcommand{\assgntop}[1]{ {+}{#1} }
\newcommand{\assgnbot}[1]{ {-}{#1} }
\newcommand{\eqdef}{\stackrel{\mathtt{def}}{=} }
\newcommand{\lldot}{{\ll\hspace{-.45em}\cdot}}
\newcommand{\transfer}[3]   { #1 \, { \leadsto_{#2} } #3}
\newcommand{\eloiseTurn}{\mathsf{elo}}
\newcommand{\eloiseLoses}{\mathsf{nowin}}
\newcommand{\oneMove}{\mathsf{move}}
\newcommand{\modelM}{\mathcal{M}}
\newtheorem{theorem}{Theorem}
\newtheorem{proposition}{Proposition}
\newtheorem{corollary}{Corollary}
\newtheorem{lemma}{Lemma}
\newtheorem{claim}{Claim}
\newcommand{\algofunction}{\textbf{function }}
\newcommand{\algoprocedure}{\textbf{procedure }}
\newcommand{\algoendfunction}{\textbf{endFunction }}
\newcommand{\algofor}{\textbf{for }}
\newcommand{\algodo}{\textbf{do }}
\newcommand{\algoendfor}{\textbf{endFor }}
\definecolor{algocommentbackgroundcolor}{rgb}{1,1,0.5}
\newcommand{\algoif}{\textbf{if }}
\newcommand{\algothen}{\textbf{then }}
\newcommand{\algoelse}{\textbf{else }}
\newcommand{\algoendif}{\textbf{endIf }}
\newcommand{\algomatch}{\textbf{match }}
\newcommand{\algoendmatch}{\textbf{endMatch }}
\newcommand{\algocase}{\textbf{case }}
\newcommand{\algoreturn}{\textbf{return }}
\newlength{\algoindentlongueur}
\newcommand{\algoindent}{\hspace*{\algoindentlongueur}}
\newlength{\algoindentavantvrulelongueur}
\newcommand{\algoindentavantvrule}{\hspace*{\algoindentavantvrulelongueur}}
\newlength{\dummy}
\newsavebox{\frameminipageboiteavecunnomsuperlongdesortequonnepuissepaslereutiliser}
\newenvironment{frameminipage}[2][c]{%
\begin{lrbox}{\frameminipageboiteavecunnomsuperlongdesortequonnepuissepaslereutiliser}%
\begin{minipage}[#1]{#2}%
} {%
\end{minipage}%
\end{lrbox}%
\framebox{\usebox{\frameminipageboiteavecunnomsuperlongdesortequonnepuissepaslereutiliser}}%
}
\newenvironment{algo} {
  \begin{frameminipage}{\textwidth-1cm}
} {
  \end{frameminipage}
}
\newenvironment{algobloc}{\setlength{\dummy}{\linewidth}\addtolength{\dummy}{- \algoindentlongueur}\addtolength{\dummy}{- \algoindentavantvrulelongueur}\algoindentavantvrule\vrule\algoindent\begin{minipage}{\dummy}}{\end{minipage}}
\newenvironment{algoblocfunction}[1]
{\algofunction #1 \\  \begin{algobloc}}
{\end{algobloc} \algoendfunction}
\date{}
\title{\DLPA and \DCLPC: model checking and satisfiability problem are indeed in PSPACE}
\author{Philippe Balbiani\thanks{Universit\'e de Toulouse, CNRS, IRIT, F-31062 Toulouse, France}\\ Andreas Herzig\thanks{Universit\'e de Toulouse, CNRS, IRIT, F-31062 Toulouse, France}\\
Fran\c{c}ois Schwarzentruber\thanks{ENS Rennes, Campus de Ker Lann, 35170 BRUZ, France}\\
Nicolas Troquard\thanks{ISTC--CNR, Trento, Italy. LACL, Universit\'e Paris-Est Cr\'eteil, France.}}
\begin{document}

\maketitle
  
  \begin{abstract}
We prove that the model checking and the satisfiability problem of both Dynamic Logic of Propositional Assignments \DLPA and Coalition Logic of Propositional Control and Delegation \DCLPC are in PSPACE. We explain why the proof of EXPTIME-hardness of the model checking problem of \DLPA presented in \cite[Thm~$4$]{DBLP:conf/lics/BalbianiHT13} is false.
We also explain why the proof of membership in PSPACE of the model checking problem  of \DCLPC given in \cite[Thm.~$4$]{DBLP:journals/jair/HoekWW10} is wrong.

\smallskip\noindent\textbf{Keywords:} Dynamic Logic of Propositional Assignments. Coalition Logic of Propositional Control and Delegation. Model checking. Satisfiability. PSPACE.

 \end{abstract}

\section{Introduction}

Balbiani~{\it et al}~\cite{DBLP:conf/lics/BalbianiHT13} study a
variant of \PDL called Dynamic Logic of Propositional Assignments
(\DLPA).
The latter was introduced in~\cite{DBLP:conf/ijcai/HerzigLMT11} and is a fragment of 
Tiomkin and Makowsky's extension of \PDL by assignments \cite{TiomkinMakowsky85}. 
It is said to be well-behaved
because unlike \PDL, it is compact, has the interpolation property,
and the Kleene star can be eliminated.  The logic was partly inspired
by the logic of delegation and propositional control \DCLPC presented
in
\cite{DBLP:journals/jair/HoekWW10}. In~\cite{DBLP:conf/lics/BalbianiHT13},
polynomial translations from \DCLPC to \DLPA and back are proposed.

\medskip

Between the papers \cite{DBLP:journals/jair/HoekWW10},
\cite{DBLP:conf/ijcai/HerzigLMT11} and
\cite{DBLP:conf/lics/BalbianiHT13}, there have been conflicting
results about the complexity of decision problems for \DLPA and
\DCLPC, satisfiability checking and model checking. There have also been
inadequate proofs for true theorem statements, and there have been
wrong proofs for wrong theorem statements.
The aim of the present paper is to set the record straight. Specifically:\footnote{
We recall that PSPACE = NPSPACE (\cite{Savitch1970177}, \cite{sipser2006introduction}) and APSPACE = EXPTIME (\cite{Chandra:1981:ALT:322234.322243}). In this paper we assume that PSPACE is different from EXPTIME. If they are equal the whole discussion ends up being a non-issue.
}
\begin{itemize}
\item The proof in~\cite{DBLP:journals/jair/HoekWW10} that \DCLPC model checking is in PSPACE is inadequate. It only proves that it is in EXPTIME. A consequence is also that the proof that \DCLPC satisfiability checking is in PSPACE is inadequate, too.
\item Following the same proof strategy, the proof in \cite{DBLP:conf/ijcai/HerzigLMT11} that \DLPA model checking is in PSPACE is inadequate. It only proves that it is in EXPTIME.\footnote{
The error is in the published version and is signaled on the website of the conference
\url{http://ijcai.org/papers11}.
}
\item The proof in~\cite{DBLP:conf/lics/BalbianiHT13} that \DLPA model checking is EXPTIME-hard is wrong: more precisely, the statement of ~\cite[Thm~$4$]{DBLP:conf/lics/BalbianiHT13} is wrong.
\item The model checking problem and the satisfiability checking problem of \DLPA are both PSPACE-complete.
\item The model checking problem and the satisfiability checking problem of \DCLPC are both PSPACE-complete.
\end{itemize}

\section{Two dynamic logics}

We present \DLPA and \DCLPC which are two interconnected dynamic logics.

\subsection{Dynamic logic of propositional assignments \DLPA}

\paragraph{Syntax}
Let $PV$ be a countable set of propositional variables (with typical members noted $p$, $q$, etc).
The set $\mathsf{Pgm}(PV)$ of all programs (with typical members noted $\alpha$, $\beta$, etc) and the set $\mathsf{Fml}(PV)$ of all formulas (with typical members noted $\phi$, $\psi$, etc) are inductively defined as follows:
\[
\begin{array}{lccccccccccccccc}
\alpha & ::= & +p & \mid & -p & \mid & (\alpha;\alpha) & \mid & (\alpha\cup\alpha) & \mid & \alpha^{\star} & \mid & \phi?\\
\phi & ::=  & p & \mid & \bot & \mid & \lbrack\alpha\rbrack\phi
\end{array}
\]
We define the other Boolean constructs as usual: 
$\neg\phi = \lbrack\phi?\rbrack\bot$, 
$(\phi\rightarrow\psi) = \lbrack\phi?\rbrack\psi$, etc.
The formula $\langle\alpha\rangle\phi$ is obtained as an abbreviation: 
$\langle\alpha\rangle\phi = \neg\lbrack\alpha\rbrack\neg\phi$. We write $\alpha^d$ for the sequence of $\alpha$ repeated $d$ times.
We adopt the standard rules for omission of the parentheses.
Let us consider an enumeration $p_{1},p_{2},\ldots$ of $PV$. 
Program ``$+p$'' makes proposition $p$ true and program ``$-p$'' makes proposition $p$ false.
The number of symbol occurrences in program $\alpha$ and formula $\phi$ are respectively
 noted $len(\alpha)$ and $len(\phi)$.

\paragraph{Semantics}
A valuation is a subset of $PV$, with typical elements $U$, $V$, etc.
We inductively define the value of a program $\alpha$, in symbols $\parallel\alpha\parallel$, and the value of a formula $\phi$, in symbols $\parallel\phi\parallel$, as follows:
\begin{align*}
\parallel+p\parallel \ =\ & \{(U,V)\ : \ V=U\cup\{p\}\}\\
\parallel-p\parallel \ =\ & \{(U,V)\ : \ V=U\setminus\{p\}\}\\
\parallel\alpha;\beta\parallel \ =\ & \{(U,V)\ : \ \text{there exists } W\subseteq PV\text{ such that }\\
&\   (U,W)\in\parallel\alpha\parallel \text{ and } (W,V)\in\parallel\beta\parallel\}\\
\parallel\alpha\cup\beta\parallel \ =\ & \parallel\alpha\parallel\cup\parallel\beta\parallel\\
\parallel\alpha^{\star}\parallel \ = \ & \{(U,V)\ : \ \text{there exist } n\in\N \text{ and } W_{0},\ldots,W_{n}\subseteq PV \text{ such that } \\
 &\ U=W_{0}, (W_{0},W_{1})\in\parallel\alpha\parallel, \ldots, (W_{n-1},W_{n})\in\parallel\alpha\parallel 
\text{ and } W_{n}=V\}\\
\parallel\phi?\parallel \ =\ & \{(U,V) \ : \ U =V \text{ and } V\in\parallel\phi\parallel\}\\
\parallel p\parallel \ =\ & \{U\ : \ p\in U\}\\
\parallel\bot\parallel \ =\ & \emptyset\\
\parallel\lbrack\alpha\rbrack\phi\parallel \ = \ & \{U \ : \text{ for all } V\subseteq PV, \text{ if } (U,V)\in\parallel\alpha\parallel, \text{ then } V\in\parallel\phi\parallel\}
\end{align*}

It follows that
$\parallel\langle\alpha\rangle\phi\parallel \ =
\{ U$: there exists $V\subseteq PV$ such that $(U,V)\in\parallel\alpha\parallel$ and $V\in\parallel\phi\parallel \}$.

\subsection{Coalition logic of propositional control and delegation \DCLPC}
\label{sec:clpc}
\label{sec:dclpc}

Coalition Logic of Propositional Control and Delegation (\DCLPC) is a
logic of \emph{agency}. Let $PV$ be a countable set of propositional
variables and $\Agt$ be a finite set of \emph{agents}.

The models of \DCLPC---models of propositional control---are couples
$(V,\xi)$ where $V$ is a subset of $PV$ and $\xi$ maps each
propositional variable to one agent in $\Agt$. The function $\xi$ is a
control function. Intuitively, for each proposition $p$, the object
$\xi(p)$ denotes the one and only one agent controlling it. Saying
that the agent $\xi(p)$ controls $p$, we mean that $\xi(p)$ can set
$p$ to true and can set $p$ to false.

\medskip

The language of \DCLPC extends propositional logic with two families
of modalities. One type of modalities is reminiscent of dynamic
logics, and thus we have a two-sorted language. In the following
grammar, $i, j \in \Agt$, and $p \in PV$.
\[
\begin{array}{lccccccccccccccc}
\pi & ::= & \transfer{i}{p}{j} & \mid & (\pi;\pi) & \mid & (\pi\cup\pi) & \mid & \pi^{\star} & \mid & \phi?\\
\phi & ::=  & p & \mid & \bot & \mid & \Diamond_i \phi& \mid & \ldia{\pi}\phi 
\end{array}
\]
We adopt the standard abbreviations.

\medskip

To differentiate the truth values of \DCLPC from those of \DLPA, we
will denote the value of \DCLPC programs and \DCLPC formulas by
$\parallel . \parallel^\#$.

Atomic delegation programs are of the form $\transfer{i}{p}{j}$ and
are read ``$i$ transfers his control over $p$ to $j$''.  The intuition
is that $\transfer{i}{p}{j}$ is applicable when $i$ controls $p$ and
that it changes the control function $\xi$ such that $j$ gets control
over $p$ (and $i$ looses it, control being exclusive).  Complex
delegation programs are defined by means of the standard \PDL
operators.
The interpretation of a delegation program is a binary relation on the
set of models of propositional control over $PV$ and $\Agt$.  For
atomic programs we have:
\begin{align*}
\parallel{\transfer{i}{p}{j}}\parallel^\# \ =\  \big \{ \, \tuple{(V,\xi),(V,\xi')} \ : \ &
\xi(p) = i, \
\xi'(p) = j, \text{ and } \\&
\xi(q) = \xi'(q) \text{ for } q \neq p \, \big \}
\end{align*}
The interpretation of complex programs is as usual.

The interpretation of \DCLPC formulas is a subset of models of
propositional control over $PV$ and $\Agt$. 
\begin{align*}
\parallel p\parallel^\# \ =\ & \{(V,\xi)\ : \ p\in V\}
\end{align*}
The interpretation of $\ldia{\pi}\varphi$ is:
\begin{align*}
\parallel{\ldia{\pi} \phi}\parallel^\# \ =\ 
\big \{ \, (V,\xi) \ : \ &\text{there is } (U,\xi') \text{ such that } \\ & 
\tuple{(V,\xi),(U,\xi')}  \in \parallel{\pi}\parallel^\#
\text{ and } (U,\xi') \in \parallel{\phi}\parallel^\# \, \big \}
\end{align*}
The modality $\Diamond_i$ allows one to talk about what an
agent $i$ is able to do by changing the truth value of the
propositional variables under its control. 
\begin{align*}
\parallel{\Diamond_i \phi}\parallel^\# \ = \{\, (V,\xi) \ : \ &\text{there is } U \text{ such that } 
(U,\xi) \in \parallel{\phi}\parallel^\# \text{ and } \\& 
\text{for every } p,  
\text{if } \xi(p) \not = i \text{ then } p \in V \text{ iff } p \in U \, \}
\end{align*}
The interpretation of complex formulas is as usual.

\subsection{Connection}

As announced the two dynamic logics reviewed here are
interconnected. In particular, we can apply the algorithms for the
decision problems of \DLPA to solve the the decision problems of
\DCLPC. Of concern here are four decision problems:

\begin{itemize}

\item \DLPA-model checking ($\mathit{MC}$):
\begin{description}
\item[input:] a valuation $U$, and a formula $\phi \in \mathsf{Fml}(PV)$,
\item[output:] \emph{yes} if $U\in \parallel\phi\parallel$, \emph{no} otherwise.
\end{description}

\item \DLPA-satisfiability ($\mathit{SAT}$):
\begin{description}
\item[input:] a formula $\phi \in \mathsf{Fml}(PV)$,
\item[output:] \emph{yes} if $\parallel\phi\parallel \not = \emptyset$, \emph{no} otherwise.
\end{description}

\item \DCLPC-model checking
\begin{description}
\item[input:] a model of propositional control $(U,\xi)$, and a \DCLPC formula,
\item[output:] \emph{yes} if $(U,\xi)\in \parallel\phi\parallel^\#$, \emph{no} otherwise.
\end{description}

\item \DCLPC-satisfiability:
\begin{description}
\item[input:] a \DCLPC formula $\phi$,
\item[output:] \emph{yes} if $\parallel\phi\parallel^\# \not = \emptyset$, \emph{no} otherwise.
\end{description}
\end{itemize}

\begin{theorem}[{\cite[Section VIII]{DBLP:conf/lics/BalbianiHT13}}]
\label{prop:delClPc-embed}
There is a polynomial reduction of \DCLPC-model checking into
\DLPA-model checking. There is a polynomial reduction of
\DCLPC-satisfiability into \DLPA-satisfiability.
\end{theorem}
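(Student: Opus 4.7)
The plan is to encode the control function $\xi$ of a \DCLPC model $(V,\xi)$ by means of fresh propositional variables $\rep{p}{i}$ reading ``agent $i$ controls $p$''. Given a \DCLPC formula $\phi$, I write $P_\phi \subseteq PV$ for the finite set of propositional variables occurring in $\phi$ and set $\sigma(V,\xi) = V \cup \{\rep{p}{\xi(p)} : p \in P_\phi\}$. I would then define a translation $tr$ acting homomorphically on the constructs common to the two logics (Boolean, sequence, union, Kleene star, test, and dynamic modality), and handling the two distinctive constructs by
\[
tr(\transfer{i}{p}{j}) \ = \ \rep{p}{i}? \,;\, \assgnbot{\rep{p}{i}} \,;\, \assgntop{\rep{p}{j}}
\]
and, writing $P_\phi = \{p_1,\ldots,p_n\}$,
\[
tr(\Diamond_i \phi) \ = \ \ldia{\pi_1^i \,;\, \ldots \,;\, \pi_n^i} tr(\phi), \qquad \pi_k^i \ = \ \neg\rep{p_k}{i}? \,\cup\, \bigl( \rep{p_k}{i}? \,;\, (\assgntop{p_k} \cup \assgnbot{p_k}) \bigr).
\]
Intuitively, $\pi_k^i$ leaves $p_k$ unchanged when $i$ does not control it and reassigns it freely otherwise.

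I would then prove by simultaneous induction on programs and formulas that $\tuple{(V,\xi),(V',\xi')} \in \parallel \pi \parallel^\#$ iff $\tuple{\sigma(V,\xi), \sigma(V',\xi')} \in \parallel tr(\pi) \parallel$, and that $(V,\xi) \in \parallel \phi \parallel^\#$ iff $\sigma(V,\xi) \in \parallel tr(\phi) \parallel$. Given this lemma, the \DCLPC-model checking reduction maps input $((U,\xi), \phi)$ to $(\sigma(U,\xi), tr(\phi))$. The \DCLPC-satisfiability reduction maps $\phi$ to $\mathit{Ctrl}(\phi) \wedge tr(\phi)$, where
\[
\mathit{Ctrl}(\phi) \ = \ \bigwedge_{p \in P_\phi} \bigvee_{i \in \Agt} \Bigl( \rep{p}{i} \wedge \bigwedge_{j \in \Agt \setminus \{i\}} \neg \rep{p}{j} \Bigr)
\]
enforces that every $p \in P_\phi$ is controlled by exactly one agent, so that any satisfying \DLPA valuation canonically induces a model of propositional control (and conversely, if $(V,\xi)$ satisfies $\phi$ then $\sigma(V,\xi)$ satisfies $\mathit{Ctrl}(\phi) \wedge tr(\phi)$).

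Polynomial size is manifest: $tr$ has only local blow-up except at $\Diamond_i$, where $tr(\Diamond_i \phi)$ adds $O(|P_\phi|)$ symbols on top of its subformula, and $\mathit{Ctrl}(\phi)$ has size $O(|P_\phi| \cdot |\Agt|^2)$. The delicate step of the proof will be the inductive case for $\Diamond_i$: I must show that the valuations reachable from $\sigma(V,\xi)$ via $\pi_1^i;\ldots;\pi_n^i$ are precisely those of the form $\sigma(U,\xi)$ with $U$ agreeing with $V$ on every $p \in P_\phi$ not controlled by $i$ (agreement off $P_\phi$ being immaterial for $tr(\phi)$). Two observations make this go through: no $\pi_k^i$ modifies any control atom $\rep{q}{j}$, so $\xi$ is preserved throughout the sequence; and by exclusivity of control, exactly one of the two disjuncts of $\pi_k^i$ is enabled in any given state. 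The remaining inductive cases are straightforward unfoldings of the semantics.
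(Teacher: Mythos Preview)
The paper does not supply a proof of this theorem: it is stated as a citation to \cite[Section VIII]{DBLP:conf/lics/BalbianiHT13} and used as a black box. So there is no ``paper's own proof'' to compare against here. That said, your reconstruction is sound and is visibly the intended one: the paper even defines the macro $\rep{p}{i}$ (rendered $\mathsf{c}_{p,i}$) precisely for such control atoms, which is strong evidence that the cited reduction uses the same encoding you propose.

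Two small points worth tightening. First, your notation overloads $\phi$: you fix a top-level formula $\phi$ to define $P_\phi$, then write $tr(\Diamond_i\phi)$ with $\phi$ now a generic subformula. Make explicit that the sequence $\pi_1^i;\ldots;\pi_n^i$ ranges over the variables of the \emph{top-level} input (or at least over a fixed superset containing all variables of every subformula); otherwise the inductive hypothesis for nested $\Diamond$'s is awkward to state. Second, your program-level invariant (``valuations reachable from a $\sigma$-image stay $\sigma$-images'') is what actually makes the backward direction of the formula lemma go through for $[\pi]$ and $\Diamond_i$; you hint at it for $\Diamond_i$ but should state it once for all translated programs (it holds because $tr(\transfer{i}{p}{j})$ preserves the exactly-one-controller constraint and the $\pi_k^i$ touch no control atoms). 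With those clarifications the argument is complete, and the size bound $O(|\phi|^2)$ for $tr(\phi)$ plus $O(|P_\phi|\cdot|\Agt|^2)$ for $\mathit{Ctrl}(\phi)$ is correct.
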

Hence, the complexity upper bound for a problem of \DLPA will transfer
polynomially to a complexity upper bound for the corresponding problem
of \DCLPC.

\section{Issue in the proof of~\cite[Thm~$4$]{DBLP:conf/lics/BalbianiHT13}}

Theorem~$4$ in~\cite{DBLP:conf/lics/BalbianiHT13} wrongly states that
$\mathit{MC}$ and $\mathit{SAT}$ are EXPTIME-hard. The source of the
problem lies in \cite[Lemma~$1$]{DBLP:conf/lics/BalbianiHT13} which
wrongly states that $\mathit{MC}$ is EXPTIME-hard, proposing an
inadequate argument for establishing the existence of a
logarithmic-space reduction of the problem
PEEK-${G_5}$~\cite{stockemeyer-chandra-79} into $\mathit{MC}$. The
claim about $\mathit{SAT}$ then comes from an actual logarithmic-space
reduction of the problem $\mathit{MC}$ into $\mathit{SAT}$.

\medskip

This section concentrates on the issue with the reduction of the
problem PEEK-${G_5}$ into $\mathit{MC}$. 

\emph{An instance of Peek} is a tuple $PE = \tuple{X_E, X_A, \Phi, V_0, \tau}$
where
	$X_E$ and $X_A$ are finite sets of propositional variables 
	such that $X_E \cap X_A = \emptyset$,
	the idea being that Player~$E$ controls the variables in $X_E$ and 
	Player~$A$ controls the variables in $X_A$;
	$\Phi$ is a propositional formula over $X_E \cup X_A$;
	$V_0 \subseteq X_E \cup X_A$ indicates which variables are
	  currently true;
	$\tau$ is either~$A$ or $E$, indicating which player makes the next move.

Informally, each instance $PE = \tuple{X_E, X_A, \Phi, V_0, \tau}$ of
Peek is played as follows.  Agents' turns strictly alternate.  At
their respective turn, Player~$E$ (resp.\ $A$) \emph{moves} by
changing the truth value of at most one variable of $X_E$
(resp.\ $X_A$) in the current valuation, either adding or withdrawing
it from the valuation.  The game ends when $\Phi$ first becomes true,
in which case we say that Player~$E$ wins.  We say that
\emph{Player~$E$ has a winning strategy in $PE$} if she can make a
sequence of moves at her turns that ensures to eventually win whatever
the moves made by Player~$A$ at his turn.

The decision problem PEEK-$G_5$ takes as input an instance\linebreak $PE =
\tuple{X_E, X_A, \Phi, V_0, \tau }$ of Peek; It outputs \emph{yes},
when Player~$E$ has a winning strategy in $PE$ and \emph{no}
otherwise. PEEK-$G_5$ is EXPTIME-complete~\cite{stockemeyer-chandra-79}.

\medskip

In~\cite[Lemma~$1$]{DBLP:conf/lics/BalbianiHT13}, it was stated that
the problem PEEK-$G_5$ on the instance $PE = \tuple{X_E,X_A, \Phi, V_0, \tau}$ returns \emph{no} if and only if $\mathit{MC}$ return \emph{yes} on the instance $(V_{PE},\varphi_{PE})$, where:
\small
\begin{align*}
V_{PE} &\ \eqdef \begin{cases}V_{0}\cup\{\eloiseLoses\} & \text{ , when }  \tau=A\\
V_{0}\cup\{\eloiseLoses, \eloiseTurn\} & \text{ , when } \tau=E\end{cases}\\
\mathsf{moveE} &\ \eqdef\  	\eloiseTurn? ; 
					\bigcup_{x \in X_E} (\assgnbot{x} \cup \assgntop{x}) ; 
					\assgnbot{\eloiseTurn}
\\
\mathsf{moveA} &\ \eqdef\  	\lnot \eloiseTurn? ; 
					\cup_{y \in X_A} (\assgnbot{y} \cup \assgntop{y}) ; 
					\assgntop{\eloiseTurn}
\\
\oneMove &\ \eqdef\  	\left ( {\sf moveE} \cup {\sf moveA} \right ) ; 
				\left ( (\Phi ? ; \assgnbot{\eloiseLoses}) \cup \lnot \Phi ? \right ) \\
\varphi_{PE} &\ \eqdef \lbox{\oneMove^\ast} \big ( \eloiseLoses \limp ( \lnot \Phi \land 
( \eloiseTurn 	  \limp \lbox{\oneMove} \eloiseLoses ) \land
( \lnot\eloiseTurn \limp \ldia{\oneMove} \eloiseLoses ))      \big )
\end{align*}
\normalsize
This is incorrect. 
For the anecdote, the mistake was found when one of us figured that if the reduction were actually working, a similar reduction could be done from PEEK-$G_5$ into the problem of model checking CTL formulas over NuSMV models, which is known to be in PSPACE. The implementation of it and the checking of a simple instance indicated the mistake.\footnote{The NuSMV file can be found at this URL \url{http://www.loa.istc.cnr.it/personal/troquard/SOFTWARES/error-peekdlpa.smv} and its listing is presented in the appendix.}
The instance of Peek considered was $PE=(X_{E},X_{A},\Phi,V_{0},\tau)$,
where $X_{E}=\{p\}$, $X_{A}=\{q,r\}$, $\Phi=p\wedge q$,
$V_{0}=\emptyset$ and $\tau=A$.  Clearly, if $A$ never adds $q$ to the
valuation $V_0$, then $\Phi$ cannot ever be true. Since $\tau =
A$, this means that $E$ has no winning strategy in the game, and PEEK-$G_5$ returns \emph{no} on this instance.
However, the problem $\mathit{MC}$ also returns \emph{no} on the instance $(V_{PE},\varphi_{PE})$, establishing a counter-example to \cite[Lemma~$1$]{DBLP:conf/lics/BalbianiHT13}. 

Without this lemma, Proposition~$14$ in \cite{DBLP:conf/lics/BalbianiHT13} stating that $\mathit{MC}$ is EXPTIME-hard has no basis. In turn, Proposition~$15$ about $\mathit{SAT}$ being EXPTIME-hard has no basis either. Theorem~$4$ in \cite{DBLP:conf/lics/BalbianiHT13} is wrong if PSPACE $\neq$ EXPTIME.

\section{On the proof of~\cite[Thm.~$4$]{DBLP:journals/jair/HoekWW10} for PSPACE membership of \DCLPC model checking}
In~\cite{DBLP:journals/jair/HoekWW10}, the authors state that the
model checking problem for \DCLPC (w.r.t.\ direct models) is
PSPACE-complete. As we shall see later, the result is true in virtue of
the algorithm for solving model checking problem for \DLPA
(Section~\ref{sec:algorithm-MC}) and
Theorem~\ref{prop:delClPc-embed}. Nevertheless, the algorithm proposed
is alternating, not non-deterministic as claimed in the article. It
therefore only allows one to conclude that the \DCLPC model checking
problem is in APSPACE and not in PSPACE.  This was already pointed out
in~\cite{DBLP:conf/lics/BalbianiHT13}; we provide a more complete
explanation now.

Let us explain why the algorithm is alternating and not
non-deterministic. In fact their algorithm is of the following form. Algorithm `DCL-PCeval' of Figure~$8$,
line~$5$ in~\cite{DBLP:journals/jair/HoekWW10} negates the Boolean
result in the following way:

\begin{algo}
\begin{algoblocfunction}{DCL-PCeval($\phi, \modelM$)}

\algoif $\dots$ \algothen

\begin{algobloc}

 $ \vdots$
 
\end{algobloc} 
 \algoelse \algoif $\phi = \lnot \psi$ \algothen
 
 \begin{algobloc}

 \algoreturn \colorbox{red!20}{\textbf{not} DCL-PCeval($\psi, \modelM$)}

\end{algobloc} 

\algoelse

 \begin{algobloc}
 
 $\vdots$ (with a call to program-eval)

\end{algobloc}

\algoendif

\end{algoblocfunction}
\end{algo}

\medskip\noindent where `program-eval' (see Fig.~$7$
in~\cite{DBLP:journals/jair/HoekWW10}) explicitly mentions
non-deterministic choices. But negation implicitly dualizes the algorithm: it transforms
true, false, non-deterministic choice, and universal choices into
false, true, universal choices, and non-deterministic choice
respectively. So the algorithm is in fact alternating.\footnote{
Using the `return' instruction to return the Boolean result of a
function is perfectly correct in a \emph{deterministic}
algorithm. Nevertheless, when one writes non-deterministic algorithms
one should \emph{explicitly} use the `reject' and `accept'
instructions that respectively correspond to the rejection and the
acceptation state in a Turing machine. Negations are strictly
forbidden in a non-deterministic algorithm.
}

\section{A deterministic procedure for \DLPA-model checking and satisfiability problem}
\label{sec:algorithm-MC}

Our goal in this paper is to prove the following result.
\begin{proposition}\label{pro_pspace}
The \DLPA-model checking and satisfiability problem are in PSPACE.
\end{proposition}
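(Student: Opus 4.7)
I will design a deterministic polynomial-space algorithm for $\mathit{MC}$ by structural recursion on programs and formulas, using a Savitch-style subroutine for the Kleene star. Because a formula $\phi$ mentions only finitely many propositional variables---say $PV(\phi)=\{p_{1},\ldots,p_{n}\}$---each relevant valuation can be stored on $n$ bits, and any quantifier over valuations can be realized by a deterministic enumeration through the $2^{n}$ candidates, using only $O(n)$ bits at a time to hold the current candidate.

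The core of the algorithm is a pair of mutually recursive deterministic procedures $\mathrm{check}(U,\phi)$ and $\mathrm{eval}(U,V,\alpha)$ whose clauses mirror the inductive definitions of $\parallel\phi\parallel$ and $\parallel\alpha\parallel$. For atomic cases ($p$, $\bot$, $+p$, $-p$) the value is read off directly. For $\lbox{\alpha}\phi$, the procedure runs a for-loop over all $V\subseteq PV(\phi)$ and returns \emph{no} as soon as some $V$ satisfies $\mathrm{eval}(U,V,\alpha)$ but fails $\mathrm{check}(V,\phi)$. For $\alpha;\beta$, it iterates over all intermediate $W$; for $\alpha\cup\beta$, it checks both disjuncts; for $\phi?$, it reduces to $U=V$ together with a $\mathrm{check}$ call. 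The only non-trivial case is $\alpha^{\star}$: I introduce a subroutine $\mathrm{reach}_{\alpha}(U,V,k)$ that decides whether there exists an $\alpha$-path of length at most $2^{k}$ from $U$ to $V$, by the Savitch recursion $\mathrm{reach}_{\alpha}(U,V,k{+}1)$ iff there exists $W$ with $\mathrm{reach}_{\alpha}(U,W,k)$ and $\mathrm{reach}_{\alpha}(W,V,k)$, the base case $k=0$ being $U=V$ or $\mathrm{eval}(U,V,\alpha)$. Since the state space contains at most $2^{n}$ valuations, it suffices to call $\mathrm{reach}_{\alpha}(U,V,n)$.

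For the space analysis, I will argue that each activation record of the call stack holds $O(n)$ bits (the valuations $U,V,W$ and a counter $k\le n$), and that the total depth of the call stack is bounded by $O(len(\phi)\cdot n)$: every non-Savitch step strictly decreases the remaining formula/program, while each invocation of the star subroutine contributes at most $n$ Savitch levels above its own base case. Hence the total working space is polynomial in $len(\phi)$ and $n$. The $\mathit{SAT}$ case is handled by the observation that, since variables not occurring syntactically in $\phi$ do not affect its truth value, $\phi$ is satisfiable iff $\mathrm{check}(U,\phi)$ succeeds for some $U\subseteq PV(\phi)$; a further deterministic enumeration over these $U$ gives a polynomial-space decision procedure.

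The main obstacle will be the careful space accounting, in particular making sure that nested occurrences of Kleene stars (including stars hidden inside tests $\phi?$ which themselves contain programs) do not blow up the stack: each Savitch invocation reuses the space of its own frame across the two recursive calls, and the mutual recursion with $\mathrm{check}$ only adds one formula-structural level per nesting of a test. A second, more conceptual point---precisely the pitfall identified in the previous section---is to be strict about implementing every quantifier over valuations by an explicit deterministic for-loop rather than a non-deterministic choice, so that the resulting algorithm lies in PSPACE and not merely in APSPACE.
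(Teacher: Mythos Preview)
Your proposal is correct and follows essentially the same approach as the paper: a fully deterministic mutual recursion on the structure of formulas and programs, with a Savitch-style divide-and-conquer subroutine for the Kleene star, and the same $O(len(\phi)\cdot n)$ bound on stack depth. The only cosmetic differences are that the paper iterates $ITE$ over exact path lengths $d\in\{0,\ldots,2^{k}-1\}$ (halving $d$ when even, decrementing when odd) rather than using your cleaner ``paths of length $\le 2^{k}$'' formulation, and it threads an explicit polarity bit $b$ through $REL$, $MOD$, $ITE$ to make the duality manifest---but since your algorithm is deterministic, negating a returned Boolean is harmless and the bit is unnecessary.
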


Proposition~\ref{pro_pspace} will be obtained as a direct consequence of Proposition~\ref{pro_principal} and Claims~\ref{claim:claim1} and~\ref{claim:claim2}. 
As to $\mathit{SAT}$, 
one can check satisfiability of a formula $\phi$
by an algorithm which first guesses a valuation $v$ and then model-checks whether $v \models \phi$.
This algorithm works in nondeterministic polynomial space NPSPACE, and 
NPSPACE = PSPACE due to Savitch's Theorem.

Furthermore, by Theorem \ref{prop:delClPc-embed} we have:
\begin{corollary}
The \DCLPC-model checking and satisfiability problem are in PSPACE.
\end{corollary}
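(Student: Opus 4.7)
The plan is to obtain the result as an immediate composition of two ingredients already available: the polynomial translation of \DCLPC into \DLPA from Theorem~\ref{prop:delClPc-embed}, and the PSPACE algorithms for \DLPA-model checking and satisfiability asserted in Proposition~\ref{pro_pspace}.

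Concretely, given an instance $((V,\xi),\phi)$ of \DCLPC-model checking, I would first invoke the translation of Theorem~\ref{prop:delClPc-embed} to compute, in polynomial time, a corresponding instance $(U',\phi')$ of \DLPA-model checking such that $(V,\xi)\in\parallel\phi\parallel^\#$ iff $U'\in\parallel\phi'\parallel$. Intuitively, the control function $\xi$ is encoded by fresh propositional variables recording, for each $p\in PV$ and each $i\in\Agt$, whether $\xi(p)=i$; the atomic delegation $\transfer{i}{p}{j}$ becomes a short \DLPA program that tests and updates these encoding variables, and $\Diamond_i\phi$ is rewritten as an \DLPA diamond that nondeterministically re-assigns exactly those variables controlled by $i$. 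The same polynomial translation handles the satisfiability reduction. Then I would run the PSPACE procedure promised by Proposition~\ref{pro_pspace} on the translated instance.

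It remains to observe that PSPACE is closed under polynomial-time many-one reductions: if the reduction is computable in time (and hence space) $q(n)$ and the \DLPA decider uses space $p(m)$ on inputs of size $m$, then the composed algorithm runs in space $O(q(n)+p(q(n)))$, which is polynomial in $n$. Applying this to both decision problems of \DCLPC yields the desired PSPACE upper bounds.

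There is no genuine obstacle at this stage: the entire technical difficulty has been packaged into Proposition~\ref{pro_pspace} (whose proof is the subject of the remainder of the paper) and into the correctness of the polynomial translation of Theorem~\ref{prop:delClPc-embed}. The only mild subtlety worth making explicit is the closure of PSPACE under polynomial-time reductions, which guarantees that composing the translation with the \DLPA decider does not blow up the space complexity.
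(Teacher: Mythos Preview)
Your proposal is correct and follows essentially the same approach as the paper: the corollary is obtained directly by composing the polynomial reductions of Theorem~\ref{prop:delClPc-embed} with the PSPACE upper bounds for \DLPA from Proposition~\ref{pro_pspace}, using closure of PSPACE under polynomial-time reductions. The paper's own justification is even terser (a one-line appeal to Theorem~\ref{prop:delClPc-embed}), so your added remarks on the shape of the translation and the space-closure argument are extra detail rather than a different route.
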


\subsection{Divide and conquer}

Divide and conquer is a familiar algorithmic design technique: for solving a problem, we cut it in several pieces, solve subproblems and combine their results. In the model checking problem for \DLPA, the subproblem to which we will apply divide and conquer is the following one:

\begin{description}
\item[input:] two valuations $U$, $V$, a program $\alpha$;
\item[output:] \emph{yes} if $(U, V) \in \parallel\alpha\parallel$, no otherwise.
\end{description}

This problem becomes tricky when $\alpha$ is of the form $\beta^*$. As we are concerned by a finite set of propositional variables, let say $k$ propositional variables, the cardinal of the set of valuations is $2^k$. Therefore, $(U, V) \in \parallel \beta^{*}\parallel$ is equivalent to $(U, V) \in \parallel \beta^N\parallel$ for $N \in \{0, \dots, 2^{k}-1\}$. In particular, if $N$ is even, $(U, V) \in \parallel \beta^N \parallel$ iff there exists $W$ such that $(U, W) \in \parallel \beta^{\frac N 2}\parallel$ and $(W, V) \in \parallel \beta^{\frac N 2}\parallel$. Thanks to divide and conquer, we are able to design an algorithm that works in polynomial space for the model checking problem in \DLPA.

Actually, the divide and conquer paradigm already appears in the proof of Savitch's theorem (\cite{Savitch1970177}, \cite{sipser2006introduction}). It has also been recently applied to prove the membership in PSPACE of the model checking of an epistemic formula dealing with agent cameras \cite{AAMAS2014}.

\subsection{Description of the algorithm}

Let us assume that the language only contains $k$ propositional variables.
In the sequel, sequences of bits are sequence of length $k$ whereas ``$+1$'' means ``$+1$ modulo $2^{k}$''.
Such sequences will be used to represent valuations.
More precisely, the valuation represented by a sequence $val$ of $k$ bits makes propositional variable $p_{i}$ true iff the $i$-th bit of $val$ is $1$.
Sequences of $k$ bits will also be used to represent integers in $\{0,\ldots,2^{k}-1\}$.
In this case, they will be noted by $d$, $e$, etc.
In the sequel, for all sequences $d,e$ of $k$ bits, $d<e$ will mean that the integer represented by the sequence $d$ is strictly smaller than the integer represented by the sequence $e$.
We define the deterministic Boolean function $REL$ taking as input a bit $b$, two valuations $val$ and $val^{\prime}$ and a program $\alpha$, the deterministic Boolean function $MOD$ taking as input a bit $b$, a valuation $val$ and a formula $\varphi$ and the deterministic Boolean function $ITE$ taking as input a bit $b$, two valuations $val$ and $val^{\prime}$, a program $\alpha$ and a sequence $d$ of $k$ bits.
Let $b$ be a bit, $val$ and $val^{\prime}$ be two valuations and $\alpha$ be a program.
The intuitive meaning of these functions will be explained later.
The deterministic Boolean function $REL$ is defined as follows:
\\
[0.20cm]
\textsf{function $REL(b,val,val^{\prime},\alpha)$ returns Boolean
\\
begin
\\
case $(b,\alpha)$ of
\\
\hspace*{0.5cm}$(0,+p)$:
%
%
%
%
$bool$ $:=$ ``$val^{\prime}$ $\not=$ $val\cup\{p\}$'';
\\
\hspace*{0.5cm}$(1,+p)$:
%
%
%
%
$bool$ $:=$ ``$val^{\prime}$ $=$ $val\cup\{p\}$'';
\\
\hspace*{0.5cm}$(0,-p)$:
%
%
%
%
$bool$ $:=$ ``$val^{\prime}$ $\not=$ $val\setminus\{p\}$'';
\\
\hspace*{0.5cm}$(1,-p)$:
%
%
%
%
$bool$ $:=$ ``$val^{\prime}$ $=$ $val\setminus\{p\}$'';
\\
\hspace*{0.5cm}$(0,\beta;\gamma)$:
\\
\hspace*{1.0cm}begin
\\
\hspace*{1.0cm}$bool$ $:=$ $true$;
\\
\hspace*{1.0cm}$val^{\prime\prime}$ $:=$ $0\ldots0$;
\\
\hspace*{1.0cm}repeat until $bool$ $=$ $false$ or  $val^{\prime\prime}$ $=$ $0\ldots0$
\\
\hspace*{1.5cm}begin
\\
\hspace*{1.5cm}$bool$ $:=$ $REL(0,val,val^{\prime\prime},\beta)$ or  $REL(0,val^{\prime\prime},val^{\prime},\gamma)$;
\\
\hspace*{1.5cm}$val^{\prime\prime}$ $:=$ $val^{\prime\prime}+1$
\\
\hspace*{1.5cm}end;
\\
\hspace*{1.0cm}end;
\\
\hspace*{0.5cm}$(1,\beta;\gamma)$:
\\
\hspace*{1.0cm}begin
\\
\hspace*{1.0cm}$bool$ $:=$ $false$;
\\
\hspace*{1.0cm}$val^{\prime\prime}$ $:=$ $0\ldots0$;
\\
\hspace*{1.0cm}repeat until $bool$ $=$ $true$ or  $val^{\prime\prime}$ $=$ $0\ldots0$
\\
\hspace*{1.5cm}begin
\\
\hspace*{1.5cm}$bool$ $:=$ $REL(1,val,val^{\prime\prime},\beta)$ and $REL(1,val^{\prime\prime},val^{\prime},\gamma)$;
\\
\hspace*{1.5cm}$val^{\prime\prime}$ $:=$ $val^{\prime\prime}+1$
\\
\hspace*{1.5cm}end;
\\
\hspace*{1.0cm}end;
\\
\hspace*{0.5cm}$(0,\beta\cup\gamma)$:
\\
\hspace*{1.0cm}$bool$ $:=$ $REL(0,val,val^{\prime},\beta)$ and $REL(0,val,val^{\prime},\gamma)$;
\\
\hspace*{0.5cm}$(1,\beta\cup\gamma)$:
\\
\hspace*{1.0cm}$bool$ $:=$ $REL(1,val,val^{\prime},\beta)$ or  $REL(1,val,val^{\prime},\gamma)$;
\\
\hspace*{0.5cm}$(0,\beta^{\star})$:
\\
\hspace*{1.0cm}begin
\\
\hspace*{1.0cm}$bool$ $:=$ $true$;
\\
\hspace*{1.0cm}$d$ $:=$ $0\ldots0$;
\\
\hspace*{1.0cm}repeat until $bool$ $=$ $false$ or  $d$ $=$ $0\ldots0$
\\
\hspace*{1.5cm}begin
\\
\hspace*{1.5cm}$bool$ $:=$ $ITE(0,val,val^{\prime},\beta,d)$;
\\
\hspace*{1.5cm}$d$ $:=$ $d+1$
\\
\hspace*{1.5cm}end;
\\
\hspace*{1.0cm}end;
\\
\hspace*{0.5cm}$(1,\beta^{\star})$:
\\
\hspace*{1.0cm}begin
\\
\hspace*{1.0cm}$bool$ $:=$ $false$;
\\
\hspace*{1.0cm}$d$ $:=$ $0\ldots0$;
\\
\hspace*{1.0cm}repeat until $bool$ $=$ $true$ or  $d$ $=$ $0\ldots0$
\\
\hspace*{1.5cm}begin
\\
\hspace*{1.5cm}$bool$ $:=$ $ITE(1,val,val^{\prime},\beta,d)$;
\\
\hspace*{1.5cm}$d$ $:=$ $d+1$
\\
\hspace*{1.5cm}end;
\\
\hspace*{1.0cm}end;
\\
\hspace*{0.5cm}$(0,\phi?)$:
\\
\hspace*{1.0cm}$bool$ $:=$ ``$val$ $\not=$ $val^{\prime}$'' or  $MOD(0,val^{\prime},\phi)$;
\\
\hspace*{0.5cm}$(1,\phi?)$:
\\
\hspace*{1.0cm}$bool$ $:=$ ``$val$ $=$ $val^{\prime}$'' and $MOD(1,val^{\prime},\phi)$
\\
end case;
\\
return $bool$
\\
end;}
%
%
%
%
%
%


\medskip
Let $b$ be a bit, $val$ be a formula and $\varphi$ be a formula.
The deterministic Boolean function $MOD$ is defined as follows:
%
%
\\
[0.20cm]
\textsf{function $MOD(b,val,\varphi)$ returns Boolean
\\
begin
\\
case $(b,\varphi)$ of
\\
\hspace*{0.5cm}$(0,p)$:
%
%
%
%
$bool$ $:=$ ``$p$ $\not\in$ $val$'';
\\
\hspace*{0.5cm}$(1,p)$:
%
%
%
%
$bool$ $:=$ ``$p$ $\in$ $val$'';
\\
\hspace*{0.5cm}$(0,\bot)$:
%
%
%
%
$bool$ $:=$ $true$;
\\
\hspace*{0.5cm}$(1,\bot)$:
%
%
%
%
$bool$ $:=$ $false$;
\\
\hspace*{0.5cm}$(0,\lbrack\alpha\rbrack\phi)$:
\\
\hspace*{1.0cm}begin
\\
\hspace*{1.0cm}$bool$ $:=$ $false$;
\\
\hspace*{1.0cm}$val^{\prime}$ $:=$ $0\ldots0$;
\\
\hspace*{1.0cm}repeat until $bool$ $=$ $true$ or  $val^{\prime}$ $=$ $0\ldots0$
\\
\hspace*{1.5cm}begin
\\
\hspace*{1.5cm}$bool$ $:=$ $REL(1,val,val^{\prime},\alpha)$ and $MOD(0,val^{\prime},\phi)$;
\\
\hspace*{1.5cm}$val^{\prime}$ $:=$ $val^{\prime}+1$
\\
\hspace*{1.5cm}end;
\\
\hspace*{1.0cm}end;
\\
\hspace*{0.5cm}$(1,\lbrack\alpha\rbrack\phi)$:
\\
\hspace*{1.0cm}begin
\\
\hspace*{1.0cm}$bool$ $:=$ $true$;
\\
\hspace*{1.0cm}$val^{\prime}$ $:=$ $0\ldots0$;
\\
\hspace*{1.0cm}repeat until $bool$ $=$ $false$ or  $val^{\prime}$ $=$ $0\ldots0$
\\
\hspace*{1.5cm}begin
\\
\hspace*{1.5cm}$bool$ $:=$ $REL(0,val,val^{\prime},\alpha)$ or  $MOD(1,val^{\prime},\phi)$;
\\
\hspace*{1.5cm}$val^{\prime}$ $:=$ $val^{\prime}+1$
\\
\hspace*{1.5cm}end;
\\
\hspace*{1.0cm}end;
\\
end case;
\\
return $bool$
\\
end;}
%
%
%
%
%
%


\medskip
Let $b$ be a bit, $val$ and $val^{\prime}$ be two valuations, $\alpha$ be a program and $d$ a sequence of $k$ bits.
The deterministic Boolean function $ITE$ is defined as follows:
%
%
\\
[0.20cm]
\textsf{function $ITE(b,val,val^{\prime},\alpha,d)$ returns Boolean
\\
begin
\\
case $(b,d)$ of
\\
\hspace*{0.5cm}$(0,0\ldots0)$:
%
%
%
%
$bool$ $:=$ ``$val$ $\not=$ $val^{\prime}$'';
\\
\hspace*{0.5cm}$(1,0\ldots0)$:
%
%
%
%
$bool$ $:=$ ``$val$ $=$ $val^{\prime}$'';
\\
\hspace*{0.5cm}$(0,$odd integer$)$:
\\
\hspace*{1.0cm}begin
\\
\hspace*{1.0cm}$bool$ $:=$ $true$;
\\
\hspace*{1.0cm}$val^{\prime\prime}$ $:=$ $0\ldots0$;
\\
\hspace*{1.0cm}repeat until $bool$ $=$ $false$ or $val^{\prime\prime}$ $=$ $0\ldots0$
\\
\hspace*{1.5cm}begin
\\
\hspace*{1.5cm}$bool$ $:=$ $REL(0,val,val^{\prime\prime},\alpha)$ or  $ITE(0,val^{\prime\prime},val^{\prime},\alpha,d-1)$;
\\
\hspace*{1.5cm}$val^{\prime\prime}$ $:=$ $val^{\prime\prime}+1$
\\
\hspace*{1.5cm}end;
\\
\hspace*{1.0cm}end;
\\
\hspace*{0.5cm}$(1,$odd integer$)$:
\\
\hspace*{1.0cm}begin
\\
\hspace*{1.0cm}$bool$ $:=$ $false$;
\\
\hspace*{1.0cm}$val^{\prime\prime}$ $:=$ $0\ldots0$;
\\
\hspace*{1.0cm}repeat until $bool$ $=$ $true$ or  $val^{\prime\prime}$ $=$ $0\ldots0$
\\
\hspace*{1.5cm}begin
\\
\hspace*{1.5cm}$bool$ $:=$ $REL(1,val,val^{\prime\prime},\alpha)$ and $ITE(1,val^{\prime\prime},val^{\prime},\alpha,d-1)$;
\\
\hspace*{1.5cm}$val^{\prime\prime}$ $:=$ $val^{\prime\prime}+1$
\\
\hspace*{1.5cm}end;
\\
\hspace*{1.0cm}end;
\\
\hspace*{0.5cm}$(0,$even integer$)$:
\\
\hspace*{1.0cm}begin
\\
\hspace*{1.0cm}$bool$ $:=$ $true$;
\\
\hspace*{1.0cm}$val^{\prime\prime}$ $:=$ $0\ldots0$;
\\
\hspace*{1.0cm}repeat until $bool$ $=$ $false$ or  $val^{\prime\prime}$ $=$ $0\ldots0$
\\
\hspace*{1.5cm}begin
\\
\hspace*{1.5cm}$bool$ $:=$ $ITE(0,val,val^{\prime\prime},\alpha,d/2)$ or  $ITE(0,val^{\prime\prime},val^{\prime},\alpha,d/2)$;
\\
\hspace*{1.5cm}$val^{\prime\prime}$ $:=$ $val^{\prime\prime}+1$
\\
\hspace*{1.5cm}end;
\\
\hspace*{1.0cm}end;
\\
\hspace*{0.5cm}$(1,$even integer$)$:
\\
\hspace*{1.0cm}begin
\\
\hspace*{1.0cm}$bool$ $:=$ $false$;
\\
\hspace*{1.0cm}$val^{\prime\prime}$ $:=$ $0\ldots0$;
\\
\hspace*{1.0cm}repeat until $bool$ $=$ $true$ or  $val^{\prime\prime}$ $=$ $0\ldots0$
\\
\hspace*{1.5cm}begin
\\
\hspace*{1.5cm}$bool$ $:=$ $ITE(1,val,val^{\prime\prime},\alpha,d/2)$ and $ITE(1,val^{\prime\prime},val^{\prime},\alpha,d/2)$;
\\
\hspace*{1.5cm}$val^{\prime\prime}$ $:=$ $val^{\prime\prime}+1$
\\
\hspace*{1.5cm}end;
\\
\hspace*{1.0cm}end;
\\
end case;
\\
return $bool$
\\
end;}
%
%
%
%
%
%
%
%


\medskip
The deterministic Boolean function $REL$ takes as input a bit $b$, two valuations $val$ and $val^{\prime}$ and a program $\alpha$.
Its termination guarantees the following:

\begin{itemize}
\item if $REL(b,val,val^{\prime},\alpha)$ returns ``true'', then either $b=0$ and $(val,val^{\prime})\not\in\parallel\alpha\parallel$, or $b=1$ and $(val,val^{\prime})\in\parallel\alpha\parallel$,

\item if $REL(b,val,val^{\prime},\alpha)$ returns ``false'', then either $b=0$ and $(val,val^{\prime})\in\parallel\alpha\parallel$, or $b=1$ and $(val,val^{\prime})\not\in\parallel\alpha\parallel$.
\end{itemize}
The deterministic Boolean function $MOD$ takes as input a bit $b$, a valuation $val$ and a formula $\varphi$.
Its termination should guarantee the following:

\begin{itemize}
\item if $MOD(b,val,\alpha)$ returns ``true'', then either $b=0$ and $val\not\in\parallel\varphi\parallel$, or $b=1$ and $val\in\parallel\varphi\parallel$,
\item if $MOD(b,val,\alpha)$ returns ``false'', then either $b=0$ and $val\in\parallel\varphi\parallel$, or $b=1$ and $val\not\in\parallel\varphi\parallel$.
\end{itemize}
The deterministic Boolean function $ITE$ takes as input a bit $b$, two valuations $val$ and $val^{\prime}$, a program $\alpha$ and a sequence $d$ of $k$ bits. We identify the sequence $d$ and the integer represented by $d$.
Its termination should guarantee the following:
\begin{itemize}
\item if $ITE(b,val,val^{\prime},\alpha,d)$ returns ``true'', then either $b=0$ and $(val,val^{\prime})\not\in{\parallel\alpha^{d}\parallel}$, or $b=1$ and $(val,val^{\prime})\in\parallel\alpha^{d}\parallel$,
\item if $ITE(b,val,val^{\prime},\alpha,d)$ returns ``false'', then either $b=0$ and $(val,val^{\prime})\in{\parallel\alpha^{d}\parallel}$, or $b=1$ and $(val,val^{\prime})\not\in\parallel\alpha^{d}\parallel$.
\end{itemize}

\subsection{Soundness and completeness}
Let $\Gamma = \mathsf{Pgm}(PV)\times \mathsf{Fml}(PV)\times \overline{K}$ where $\overline{K}$ is the set of all sequences of $k$ bits.
We define the binary relation $\ll$ on $\Gamma$ in the following way:
$(\alpha,\phi,d)\ll(\beta,\psi,e)$ iff one of following condition holds:
\begin{itemize}
\item $len(\alpha)+len(\phi)<len(\beta)+len(\psi)$,
\item $len(\alpha)+len(\phi)=len(\beta)+len(\psi)$ and $d<e$.
\end{itemize}

\begin{lemma}\label{lell}
$\ll$ is a well-founded strict partial order on $\Gamma$.
\end{lemma}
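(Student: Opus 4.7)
The plan is to recognize $\ll$ as the lexicographic product of two simpler well-founded strict partial orders and then verify the three defining properties—irreflexivity, transitivity, and well-foundedness—by standard arguments on lexicographic products.

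First I would set up notation: for each triple $(\alpha,\phi,d)\in\Gamma$ let $s(\alpha,\phi):=len(\alpha)+len(\phi)\in\N$, and interpret $d\in\overline{K}$ as the integer in $\{0,\ldots,2^{k}-1\}$ it represents. Then $(\alpha,\phi,d)\ll(\beta,\psi,e)$ iff either $s(\alpha,\phi)<s(\beta,\psi)$ in $\N$, or $s(\alpha,\phi)=s(\beta,\psi)$ and $d<e$ in $\{0,\ldots,2^{k}-1\}$. This is exactly the lexicographic product of the usual strict order $<$ on $\N$ with the usual strict order $<$ on $\{0,\ldots,2^{k}-1\}$, pulled back along the map $(\alpha,\phi,d)\mapsto(s(\alpha,\phi),d)$.

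Next I would discharge irreflexivity and transitivity. Irreflexivity is immediate: if $(\alpha,\phi,d)\ll(\alpha,\phi,d)$ then either $s(\alpha,\phi)<s(\alpha,\phi)$ or $d<d$, both impossible. For transitivity, assume $(\alpha_1,\phi_1,d_1)\ll(\alpha_2,\phi_2,d_2)\ll(\alpha_3,\phi_3,d_3)$ and do a brief case split on whether each step is strict in the first or second component; in every combination one concludes that either $s(\alpha_1,\phi_1)<s(\alpha_3,\phi_3)$ (if any strict first-coordinate step occurs) or else $s$ is constant across the three triples and $d_1<d_2<d_3$ gives $d_1<d_3$.

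Finally, well-foundedness is the heart of the matter, and it will be the step to state carefully since the full relation ranges over an infinite set (even though the second coordinate is finite). Suppose for contradiction that $(\alpha_0,\phi_0,d_0)\gg(\alpha_1,\phi_1,d_1)\gg\cdots$ is an infinite descending chain. The sequence $s(\alpha_i,\phi_i)$ is non-increasing in $\N$, hence eventually constant from some index $N$ onwards; from that point, each step must be strict in the second coordinate, giving an infinite strictly descending sequence $d_N>d_{N+1}>\cdots$ in the finite set $\{0,\ldots,2^{k}-1\}$, which is impossible. I do not foresee any real obstacle beyond keeping the two cases in the well-foundedness argument disentangled; the whole proof is a routine verification once $\ll$ is identified as a lexicographic product of a well-founded order on $\N$ with a well-founded order on a finite set.
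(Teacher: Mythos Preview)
Your proposal is correct and follows essentially the same approach as the paper: both arguments rest on the well-foundedness of the standard order on $\N$, with $\ll$ recognised as a lexicographic product. The paper's own proof is a single sentence (``By the well-foundedness of the standard linear order between non-negative integers''), so your version simply spells out explicitly the irreflexivity, transitivity, and infinite-descent argument that the paper leaves implicit.
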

\begin{proof}
By the well-foundedness of the standard linear order between non-negative integers.
\end{proof}

Let $\Sigma$ be the set of all $(\alpha,\phi,d)\in \Gamma$ such that the following condition holds:
\begin{enumerate}

\item for all bits $b$ and for all valuations $val$ and $val^{\prime}$,
\begin{itemize}
\item if $REL(b,val,val^{\prime},\alpha)$ returns ``true'', then either $b=0$ and $(val,val^{\prime})\not\in{\parallel\alpha\parallel}$, or $b=1$ and $(val,val^{\prime})\in\parallel\alpha\parallel$,
\item if $REL(b,val,val^{\prime},\alpha)$ returns ``false'', then either $b=0$ and $(val,val^{\prime})\in{\parallel\alpha\parallel}$, or $b=1$ and $(val,val^{\prime})\not\in\parallel\alpha\parallel$,
\end{itemize}

\item for all bits $b$ and for all valuations $val$,
\begin{itemize}
\item if $MOD(b,val,\phi)$ returns ``true'', then either $b=0$ and $val\not\in\parallel\phi\parallel$, or $b=1$ and $val\in\parallel\phi\parallel$,
\item if $MOD(b,val,\phi)$ returns ``false'', then either $b=0$ and $val\in\parallel\phi\parallel$, or $b=1$ and $val\not\in\parallel\phi\parallel$,
\end{itemize}

\item for all bits $b$ and for all valuations $val$ and $val^{\prime}$,
\begin{itemize}
\item if $ITE(b,val,val^{\prime},\alpha,d)$ returns ``true'', then either $b=0$ and $(val,val^{\prime})\not\in\parallel\alpha^{d}\parallel$, or $b=1$ and $(val,val^{\prime})\in\parallel\alpha^{d}\parallel$,
\item if $ITE(b,val,val^{\prime},\alpha,d)$ returns ``false'', then either $b=0$ and $(val,val^{\prime})\in\parallel\alpha^{d}\parallel$, or $b=1$ and $(val,val^{\prime})\not\in\parallel\alpha^{d}\parallel$.
\end{itemize}

\end{enumerate}

The aim is to prove by $\ll$-induction that all $(\alpha,\phi,d)$ are in $\Sigma$. As lemma \ref{lell} states that $\ll$ is a well-founded strict partial order, it is sufficient to prove the following lemma.

\begin{lemma}\label{induction}
Let $(\alpha,\phi,d)\in \Gamma$. If

\begin{center}
for all $(\beta,\psi,e)\in \Gamma$, if $(\beta,\psi,e)\ll(\alpha,\phi,d)$, then $(\beta,\psi,e)\in \Sigma$ \hspace{5mm} \emph{(H)}
\end{center}
then
$(\alpha,\phi,d)\in \Sigma$.
\end{lemma}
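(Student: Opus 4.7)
The plan is to fix $(\alpha,\phi,d)\in\Gamma$, assume the induction hypothesis (H), and verify the three conditions defining $\Sigma$ by a syntactic case analysis: condition~1 (soundness of $REL$ on $\alpha$) by cases on the outermost constructor of $\alpha$, condition~2 (soundness of $MOD$ on $\phi$) by cases on that of $\phi$, and condition~3 (soundness of $ITE$ on $(\alpha,d)$) by cases on the bit sequence $d$. In every case the recursive sub-calls of $REL$, $MOD$ or $ITE$ produce triples that are $\ll$-smaller than $(\alpha,\phi,d)$, so (H) applies and yields the required correctness of the sub-calls.

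For condition~1 the atomic clauses $+p$ and $-p$ follow directly from the definition of $\parallel\cdot\parallel$. For the composite clauses $\beta;\gamma$, $\beta\cup\gamma$, $\beta^{\star}$ and $\phi'?$ I would state the invariant that the repeat-until loop --- which iterates $val''$ (or $d$) through all $2^{k}$ elements of $\overline{K}$ before wrapping back to $0\ldots0$ --- computes the universal combination (when $b=0$) or the existential combination (when $b=1$) of its sub-calls; the required property then follows from the semantics of $;,\cup,\star,?$ together with (H) applied to strict sub-programs of $\alpha$, strict sub-formulas of $\alpha$ and, in the $\beta^{\star}$ case, to $ITE$ on the sub-program $\beta$. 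Condition~2 is entirely analogous, the only non-trivial case being $\phi=[\alpha']\phi'$, which couples a $REL$ call on the sub-program $\alpha'$ with a $MOD$ call on the sub-formula $\phi'$. For condition~3 the base case $d=0\ldots0$ corresponds to the identity relation $\parallel\alpha^{0}\parallel$; for $d$ odd I would use the decomposition $\alpha^{d}=\alpha;\alpha^{d-1}$, combining condition~1 (for $REL$ on $\alpha$, already established) with (H) applied to $(\alpha,\phi,d-1)$; for $d$ even the decomposition $\alpha^{d}=\alpha^{d/2};\alpha^{d/2}$ invokes (H) applied to $(\alpha,\phi,d/2)$. In both recursive sub-cases the sum of lengths is preserved while the bit sequence strictly decreases, so the triples are indeed $\ll$-smaller than $(\alpha,\phi,d)$.

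The main obstacle, in my view, is the systematic bookkeeping around the control bit $b$, which dualises every clause: in mode $b=0$ the algorithm refutes, so the loop computes a conjunction and returning $true$ means the pair is \emph{not} in the relation, whereas in mode $b=1$ it confirms, so the loop computes a disjunction and $true$ means the pair \emph{is} in the relation. One has to check, in every syntactic clause, that the Boolean operator, the loop's exit condition and the bit passed to each recursive call are consistent with the intended semantic duality between $;,\cup,\star,?,[\cdot]$ and their negations. This is where most of the routine verification will lie, but it is a matter of careful case-by-case inspection rather than of conceptual difficulty.
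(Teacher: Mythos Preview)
Your proposal is correct and follows essentially the same route as the paper: a syntactic case analysis on $\alpha$, $\phi$, and $d$ respectively for the three conditions in $\Sigma$, checking in each clause that the recursive sub-calls land on $\ll$-smaller triples so that (H) applies, and unwinding the repeat-until loops as universal/existential combinations according to the control bit $b$. The only cosmetic difference is that in the odd-$d$ clause of $ITE$ you invoke condition~1 ``already established'' for the $REL$ call on $\alpha$, whereas the paper simply applies (H) to the smaller triple $(\alpha,\phi,d-1)$, which carries condition~1 along with condition~3; both are valid.
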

\begin{proof}
Suppose (H).\\
{\bf $(1.)$~The function $REL$.}
Let $b$ be a bit and $val$ and $val^{\prime}$ be valuations.
\\
Suppose $REL(b,val,val^{\prime},\alpha)$ returns ``true''.
We have to consider different cases.
\\
{\bf Cases $(b,\alpha)=(0,+p)$, or $(b,\alpha)=(1,+p)$, or $(b,\alpha)=(0,-p)$ and $(b,\alpha)=(1,-p)$.}
Left to the reader.
\\
{\bf Case $(b,\alpha)=(0,\beta;\gamma)$.}
Hence, $b=0$ and $\alpha=\beta;\gamma$.
Since $REL(b,val,val^{\prime},\alpha)$ returns ``true'', then for all valuations $val^{\prime\prime}$, either $REL(0,val,val^{\prime\prime},\beta)$ returns ``true'', or $REL(0,val^{\prime\prime},val^{\prime},\gamma)$ returns ``true''.
Remark that $(\beta,\phi,d)\ll(\alpha,\phi,d)$ and $(\gamma,\phi,d)\ll(\alpha,\phi,d)$.
Since (H), then $(\beta,\phi,d)\in \Sigma$ and $(\gamma,\phi,d)\in \Sigma$.
Since for all valuations $val^{\prime\prime}$, either $REL(0,val,val^{\prime\prime},\beta)$ returns ``true'', or\linebreak $REL(0,val^{\prime\prime},val^{\prime},\gamma)$ returns ``true'', then for all valuations $val^{\prime\prime}$, either\linebreak $(val,val^{\prime\prime})\not\in\parallel\beta\parallel$, or $(val^{\prime\prime},val^{\prime})\not\in\parallel\gamma\parallel$.
Thus, $(val,val^{\prime})\not\in\parallel\alpha\parallel$.
\\
{\bf Case $(b,\alpha)=(1,\beta;\gamma)$.}
Hence, $b=1$ and $\alpha=\beta;\gamma$.
Since $REL(b,val,val^{\prime},\alpha)$ returns ``true'', then there exists a valuation $val^{\prime\prime}$ such that $REL(b,val,val^{\prime\prime},\beta)$ returns ``true'' and $REL(b,val^{\prime\prime},val^{\prime},\gamma)$ returns ``true''.
Remark that $(\beta,\phi,d)\ll(\alpha,\phi,d)$ and $(\gamma,\phi,d)\ll(\alpha,\phi,d)$.
Since (H), then $(\beta,\phi,d)\in \Sigma$ and $(\gamma,\phi,d)\in \Sigma$.
Since there exists a valuation $val^{\prime\prime}$ such that $REL(1,val,val^{\prime\prime},\beta)$ returns ``true'' and $REL(1,val^{\prime\prime},val^{\prime},\gamma)$ returns ``true'', then there exists a valuation $val^{\prime\prime}$ such that $(val,val^{\prime\prime})\in\parallel\beta\parallel$ and $(val^{\prime\prime},val^{\prime})\in\parallel\gamma\parallel$.
Thus, $(val,val^{\prime})\in\parallel\alpha\parallel$.
\\
{\bf Cases $(b,\alpha)=(0,\beta\cup\gamma)$ and $(b,\alpha)=(1,\beta\cup\gamma)$.}
These cases are similarly treated.
\\
{\bf Case $(b,\alpha)=(0,\beta^{\star})$.}
Hence, $b=0$ and $\alpha=\beta^{\star}$.
Since $REL(b,val,val^{\prime},\alpha)$ returns ``true'', then for all sequences $e$ of $k$ bits, $ITE(0,val,val^{\prime},\beta,e)$ returns ``true''.
Remark that $(\beta,\phi,e)\ll(\alpha,\phi,d)$.
Since (H), then $(\beta,\phi,e)\in \Sigma$.
Since for all sequences $e$ of $k$ bits, $ITE(0,val,val^{\prime},\beta,e)$ returns ``true'', then for all sequences $e$ of $k$ bits, $(val,val^{\prime})\not\in\parallel\beta^{e}\parallel$.
Thus, $(val,val^{\prime})\not\in\parallel\alpha\parallel$.
\\
{\bf Case $(b,\alpha)=(1,\beta^{\star})$.}
Hence, $b=1$ and $\alpha=\beta^{\star}$.
Since $REL(b,val,val^{\prime},\alpha)$ returns ``true'', then there exists a sequence $e$ of $k$ bits such that\linebreak $ITE(1,val,val^{\prime},\beta,e)$ returns ``true''.
Remark that $(\beta,\phi,e)\ll(\alpha,\phi,d)$.
Since (H), then $(\beta,\phi,e)\in \Sigma$.
Since there exists a sequence $e$ of $k$ bits such that $ITE(1,val,val^{\prime},\beta,e)$ returns ``true'', then there exists a sequence $e$ of $k$ bits such that $(val,val^{\prime})\in\parallel\beta^{e}\parallel$.
Thus, $(val,val^{\prime})\in\parallel\alpha\parallel$.
\\
{\bf Case $(b,\alpha)=(0,\psi?)$.}
Hence, $b=0$ and $\alpha=\psi?$.
Since $REL(b,val,val^{\prime},\alpha)$ returns ``true'', then either $val\not=val^{\prime}$, or $MOD(0,val^{\prime},\psi)$ returns ``true''.
In the former case, $(val,val^{\prime})\not\in\parallel\alpha\parallel$.
In the latter case, remark that $(+p,\psi,d)\ll(\alpha,\phi,d)$.
Since (H), then $(+p,\psi,d)\in \Sigma$.
Since $MOD(0,val^{\prime},\psi)$ returns ``true''', then $val^{\prime}\not\in\parallel\psi\parallel$.
Thus, $(val,val^{\prime})\not\in\parallel\alpha\parallel$.
\\
{\bf Case $(b,\alpha)=(1,\psi?)$.}
Hence, $b=1$ and $\alpha=\psi?$.
Since $REL(b,val,val^{\prime},\alpha)$ returns ``true'', then $val=val^{\prime}$ and $MOD(1,val^{\prime},\psi)$ returns ``true''.
Remark that $(+p,\psi,d)\ll(\alpha,\phi,d)$.
Since (H), then $(+p,\psi,d)\in \Sigma$.
Since $MOD(1,val^{\prime},\psi)$ returns ``true''', then $val^{\prime}\in\parallel\psi\parallel$.
Since $val=val^{\prime}$, then $(val,val^{\prime})\in\parallel\alpha\parallel$.
\\
Suppose $REL(b,val,val^{\prime},\alpha)$ returns ``false''.
We have to consider cases similar to the above ones.
\\
{\bf $(2.)$~The function $MOD$.}
Let $b$ be a bit and $val$ be a valuation.
\\
Suppose $MOD(b,val,\phi)$ returns ``true''.
We have to consider several cases.
\\
{\bf Cases $(b,\phi)=(0,p)$, or $(b,\phi)=(1,p)$, or $(b,\phi)=(0,\bot)$ and $(b,\phi)=(1,\bot)$.}
Left to the reader.
\\
{\bf Case $(b,\phi)=(0,\lbrack\beta\rbrack\psi)$.}
Hence, $b=0$ and $\phi=\lbrack\beta\rbrack\psi$.
Since $MOD(b,val,\phi)$ returns ``true'', then there exists a valuation $val^{\prime}$ such that $REL(1,val,val^{\prime},\beta)$ returns ``true'' and $MOD(0,val^{\prime},\psi)$ returns ``true''.
Remark that $(\beta,\psi,d)\ll(\alpha,\phi,d)$.
Since (H), then $(\beta,\psi,d)\in \Sigma$.
Since there exists a valuation $val^{\prime}$ such that $REL(1,val,val^{\prime},\beta)$ returns ``true'' and $MOD(0,val^{\prime},\psi)$ returns ``true'', then there exists a valuation $val^{\prime}$ such that $(val,val^{\prime})\in\parallel\beta\parallel$ and $val^{\prime}\not\in\parallel\psi\parallel$.
Thus, $val\not\in\parallel\phi\parallel$.
\\
{\bf Case $(b,\phi)=(1,\lbrack\beta\rbrack\psi)$.}
Hence, $b=1$ and $\phi=\lbrack\beta\rbrack\psi$.
Since $MOD(b,val,\phi)$ returns ``true'', then for all valuations $val^{\prime}$, either\linebreak $REL(0,val,val^{\prime},\beta)$ returns ``true'', or $MOD(1,val^{\prime},\psi)$ returns ``true''.
Remark that $(\beta,\psi,d)\ll(\alpha,\phi,d)$.
Since (H), then $(\beta,\psi,d)\in \Sigma$.
Since for all valuations $val^{\prime}$, either $REL(0,val,val^{\prime},\beta)$ returns ``true'', or $MOD(1,val^{\prime},\psi)$ returns ``true'', then for all valuations $val^{\prime}$, either $(val,val^{\prime})\not\in\parallel\beta\parallel$, or $val^{\prime}\in\parallel\psi\parallel$.
Thus, $val\in\parallel\phi\parallel$.
\\
Suppose $MOD(b,val,\phi)$ returns ``false''.
We have to consider cases similar to the above ones.
\\
{\bf $(3.)$~The function $ITE$.}
Let $b$ be a bit and $val$ and $val^{\prime}$ be valuations.
\\
Suppose $ITE(b,val,val^{\prime},\alpha,d)$ returns ``true''.
We have to consider several cases.
\\
{\bf Cases $(b,d)=(0,0\ldots0)$, or $(b,d)=(1,0\ldots0)$.}
Left to the reader.
\\
{\bf Case $(b,d)=(0,$odd integer$)$.}
Hence, $b=0$ and $d=e1$ for some sequence $e$ of $k-1$ bits.
Since $ITE(b,val,val^{\prime},\alpha,d)$ returns ``true'', then for all valuations $val^{\prime\prime}$, either $REL(0,val,val^{\prime\prime},\alpha)$ returns ``true'', or $ITE(0,val^{\prime\prime},val^{\prime},\alpha,e)$ returns ``true''.
Remark that $(\alpha,\phi,e)\ll(\alpha,\phi,d)$.
Since (H), then $(\alpha,\phi,e)\in \Sigma$.
Since for all valuations $val^{\prime\prime}$, either $REL(0,val,val^{\prime\prime},\alpha)$ returns ``true'', or $ITE(0,val^{\prime\prime},val^{\prime},\alpha,e)$ returns ``true'', then for all valuations $val^{\prime\prime}$, either $(val,val^{\prime\prime})\not\in\parallel\alpha\parallel$, or $(val^{\prime\prime},val^{\prime})\not\in\parallel\alpha^{e}\parallel$.
Thus, $(val,val^{\prime})\not\in\parallel\alpha^{d}\parallel$.
\\
{\bf Case $(b,d)=(1,$odd integer$)$.}
Hence, $b=1$ and $d=e1$ for some sequence $e$ of $k-1$ bits.
Since $ITE(b,val,val^{\prime},\alpha,d)$ returns ``true'', then there exists a valuation $val^{\prime\prime}$ such that $REL(1,val,val^{\prime\prime},\alpha)$ returns ``true'' and\linebreak $ITE(1,val^{\prime\prime},val^{\prime},\alpha,e0)$ returns ``true''.
Remark that $(\alpha,\phi,e0)\ll(\alpha,\phi,d)$.
Since (H), then $(\alpha,\phi,e0)\in \Sigma$.
Since there exists a valuation $val^{\prime\prime}$ such that $REL(1,val,val^{\prime\prime},\alpha)$ returns ``true'' and $ITE(1,val^{\prime\prime},val^{\prime},\alpha,e0)$ returns ``true'', then there exists a valuation $val^{\prime\prime}$ such that $(val,val^{\prime\prime})\in\parallel\alpha\parallel$ and $(val^{\prime\prime},val^{\prime})\in\parallel\alpha^{e0}\parallel$.
Thus, $(val,val^{\prime})\in\parallel\alpha^{d}\parallel$.
\\
{\bf Case $(b,d)=(0,$even integer$)$.}
Hence, $b=0$ and $d=e0$ for some sequence $e$ of $k-1$ bits.
Since $ITE(b,val,val^{\prime},\alpha,d)$ returns ``true'', then for all valuations $val^{\prime\prime}$, either $ITE(0,val,val^{\prime\prime},\alpha,0e)$ returns ``true'', or $ITE(0,val^{\prime\prime},val^{\prime},\alpha,0e)$ returns ``true''.
Remark that $(\alpha,\phi,0e)\ll(\alpha,\phi,d)$.
Since (H), then $(\alpha,\phi,0e)\in \Sigma$.
Since for all valuations $val^{\prime\prime}$, either $ITE(0,val,val^{\prime\prime},\alpha,0e)$ returns ``true'', or $ITE(0,val^{\prime\prime},val^{\prime},\alpha,0e)$ returns ``true'', then for all valuations $val^{\prime\prime}$, either $(val,val^{\prime\prime})\not\in\parallel\alpha^{0e}\parallel$, or $(val^{\prime\prime},val^{\prime})\not\in\parallel\alpha^{0e}\parallel$.
Thus, $(val,val^{\prime})\not\in\parallel\alpha^{d}\parallel$.
\\
{\bf Case $(b,d)=(1,$even integer$)$.}
Hence, $b=1$ and $d=e0$ for some sequence $e$ of $k-1$ bits.
Since $ITE(b,val,val^{\prime},\alpha,d)$ returns ``true'', then there exists a valuation $val^{\prime\prime}$ such that $ITE(1,val,val^{\prime\prime},\alpha,0e)$ returns ``true'' and $ITE(1,val^{\prime\prime},val^{\prime},\alpha,0e)$ returns ``true''.
Remark that $(\alpha,\phi,0e)\ll(\alpha,\phi,d)$.
Since (H), then $(\alpha,\phi,0e)\in \Sigma$.
Since there exists a valuation $val^{\prime\prime}$ such that $ITE(1,val,val^{\prime\prime},\alpha,0e)$ returns ``true'' and $ITE(1,val^{\prime\prime},val^{\prime},\alpha,0e)$ returns ``true'', then there exists a valuation $val^{\prime\prime}$ such that $(val,val^{\prime\prime})\in\parallel\alpha^{0e}\parallel$ and $(val^{\prime\prime},val^{\prime})\in\parallel\alpha^{0e}\parallel$.
Thus, $(val,val^{\prime})\in\parallel\alpha^{d}\parallel$.
\\
Suppose $ITE(b,val,val^{\prime},\alpha,d)$ returns ``false''.
We have to consider cases similar to the above ones.
\end{proof}

\begin{proposition}\label{pro_principal}
$\Sigma = \Gamma$.
\end{proposition}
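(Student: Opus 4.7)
The plan is to derive Proposition~\ref{pro_principal} as a direct application of the principle of well-founded induction on $\ll$, using the two lemmas already at our disposal. Lemma~\ref{lell} supplies the well-foundedness of $\ll$ on $\Gamma$, so it suffices to show that $\Sigma$ is $\ll$-inductive, i.e.\ that whenever every $\ll$-predecessor of a triple is in $\Sigma$, the triple itself is in $\Sigma$. But this is exactly what Lemma~\ref{induction} asserts (its hypothesis is the standard induction hypothesis, and its conclusion is membership in $\Sigma$). Hence the two ingredients needed to invoke well-founded induction are both available off the shelf.

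Concretely, I would argue by contradiction: assume $\Gamma \setminus \Sigma$ is non-empty, and use Lemma~\ref{lell} to extract an $\ll$-minimal element $(\alpha,\phi,d) \in \Gamma \setminus \Sigma$. By minimality, every $(\beta,\psi,e) \ll (\alpha,\phi,d)$ lies in $\Sigma$, so hypothesis (H) of Lemma~\ref{induction} holds at $(\alpha,\phi,d)$. Applying Lemma~\ref{induction} gives $(\alpha,\phi,d) \in \Sigma$, contradicting its choice. Therefore $\Gamma \setminus \Sigma = \emptyset$, and since $\Sigma \subseteq \Gamma$ by definition we conclude $\Sigma = \Gamma$.

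There is no real obstacle at this point: all the delicate combinatorial/semantic case-analysis for $REL$, $MOD$ and $ITE$ has already been absorbed into Lemma~\ref{induction}, and the arithmetic fact that the lexicographic-style order $\ll$ is well-founded has been isolated in Lemma~\ref{lell}. The only thing to verify in writing up is that the formulation of (H) in Lemma~\ref{induction} coincides verbatim with the induction hypothesis required by well-founded induction on $\ll$, which it does.
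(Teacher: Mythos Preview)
Your argument is correct and is exactly the approach of the paper: the paper's proof of Proposition~\ref{pro_principal} simply reads ``By Lemmas~\ref{lell} and~\ref{induction}'', and your write-up merely unpacks the routine well-founded-induction (equivalently, minimal-counterexample) reasoning that combines these two lemmas.
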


\begin{proof}
By Lemmas~\ref{lell} and~\ref{induction}.
\end{proof}
Hence, the functions $REL$, $MOD$ and $ITE$ are sound and complete.

\subsection{Complexity}
For all programs $\alpha$, let $f_{REL}(\alpha)$ be the maximal number of recursive calls between $REL$, $MOD$ and $ITE$ within the context of a call of the form\linebreak $REL(b,val,val^{\prime},\alpha)$.
For all formulas $\varphi$, let $f_{MOD}(\varphi)$ be the maximal number of recursive calls between $REL$, $MOD$ and $ITE$ within the context of a call of the form $MOD(b,val,\varphi)$.
For all programs $\alpha$, let $f_{ITE}(\alpha)$ be the maximal number of recursive calls between $REL$, $MOD$ and $ITE$ within the context of a call of the form $ITE(b,val,val^{\prime},\alpha,d)$.

\begin{claim}\label{claim:claim1}
$f_{ITE}(\alpha)$ $\leq$ $f_{REL}(\alpha)+2\times k-1$.
\end{claim}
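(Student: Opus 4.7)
The plan is to introduce an auxiliary quantity $g(\alpha,d)$ denoting the maximum depth of nested recursive calls reached by any invocation of the form $ITE(b,val,val',\alpha,d)$, and to bound it by $f_{REL}(\alpha) + h(d)$ where $h(d)$ counts the arithmetic steps implicit in the $ITE$ pseudocode for shrinking $d$ down to $0\ldots0$ (``subtract $1$'' on odd $d$, ``divide by $2$'' on even nonzero $d$). The claim then reduces to showing $h(d) \le 2k-1$ for every $k$-bit sequence $d$.

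First I would read off the recurrence for $g$ directly from the four cases of $ITE$. When $d = 0\ldots0$ no recursive call is made, so $g(\alpha,d)=0$. When $d$ is odd, each body iteration makes a call to $REL(\cdot,\cdot,\cdot,\alpha)$ (contributing at most $f_{REL}(\alpha)$ deeper calls) and a call to $ITE(\cdot,\cdot,\cdot,\alpha,d-1)$, so $g(\alpha,d) \le 1 + \max\bigl(f_{REL}(\alpha),\, g(\alpha,d-1)\bigr)$. When $d$ is even and nonzero, each body iteration makes two calls to $ITE(\cdot,\cdot,\cdot,\alpha,d/2)$, so $g(\alpha,d) \le 1 + g(\alpha,d/2)$. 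Defining $h$ by the matching recurrence ($h(0\ldots0)=0$, $h(d)=1+h(d-1)$ for odd $d$, $h(d)=1+h(d/2)$ for even nonzero $d$), an easy induction on the usual order on $\{0,\ldots,2^k-1\}$ yields $g(\alpha,d) \le f_{REL}(\alpha) + h(d)$.

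The only real work, and the main obstacle, is the combinatorial bound $h(d) \le 2k-1$. I would argue as follows: the reduction $d \mapsto d-1$ (odd case) turns off the bit in position $0$ without touching any other bit, and the reduction $d \mapsto d/2$ (even case) simply shifts all bits rightward by one position. Hence throughout the reduction to $0\ldots0$ the total number of ``subtract $1$'' steps equals the Hamming weight of $d$, which is at most $k$, while the total number of ``divide by $2$'' steps is bounded by the index of the most significant bit of $d$, which is at most $k-1$. Adding these two contributions gives $h(d) \le k + (k-1) = 2k-1$, with equality attained on the all-ones sequence.

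Taking the maximum over $d$ then yields $f_{ITE}(\alpha) = \max_d g(\alpha,d) \le f_{REL}(\alpha) + 2k - 1$, which is exactly the statement of the claim. Nothing in the argument depends on the structure of $\alpha$, so the bound is uniform in the program; this is important because the claim will be combined with a companion bound $f_{REL}(\alpha) \le O(len(\alpha)) + O(k)$ in the following claim to obtain the PSPACE upper bound.
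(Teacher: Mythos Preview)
Your argument is correct and is exactly the kind of unpacking the paper has in mind: the paper's own proof consists of the single word ``Obvious,'' so you have simply supplied the details behind that word. The decomposition into a step-count function $h(d)$ governed by the subtract/halve recursion, together with the Hamming-weight plus MSB-index bound $h(d)\le k+(k-1)=2k-1$, is the natural way to see it and matches the intended reasoning.

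One minor inaccuracy in your closing remark (not in the proof itself): the companion bound in the next claim is multiplicative, $f_{REL}(\alpha)\le 2\,len(\alpha)\,k$, not additive $O(len(\alpha))+O(k)$.
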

\begin{proof}
Obvious.
\end{proof}

\begin{claim}\label{claim:claim2}
$f_{REL}(\alpha)$ $\leq$ $2\times len(\alpha)\times k$ and $f_{MOD}(\varphi)$ $\leq$ $2\times len(\varphi)\times k$.
\end{claim}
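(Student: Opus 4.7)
The plan is to prove both inequalities simultaneously by structural induction, where the cases for $f_{REL}$ may appeal to the induction hypothesis for $f_{MOD}$ on strictly shorter formulas (via the test case $\phi?$), and vice versa for $f_{MOD}$ (via the modality case $[\alpha]\psi$). More formally, one inducts on $len(\alpha)+len(\varphi)$, or equivalently on the well-founded order $\ll$ introduced in Lemma~\ref{lell}.

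For the base cases, inspecting the algorithms shows that $REL(b,val,val',\pm p)$ and $MOD(b,val,p)$, $MOD(b,val,\bot)$ contain no recursive calls at all, so $f_{REL}(\pm p)=0$ and $f_{MOD}(p)=f_{MOD}(\bot)=0$, both trivially bounded by $2\cdot 1\cdot k$ since $k\ge 1$.

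For the inductive cases of $f_{REL}$, I would handle the program constructors one by one. For sequential composition $\beta;\gamma$ and union $\beta\cup\gamma$, a single level of recursion calls $REL$ on the strict subprograms $\beta$ and $\gamma$, so
\[
f_{REL}(\beta;\gamma),\, f_{REL}(\beta\cup\gamma) \;\le\; 1 + \max\bigl(f_{REL}(\beta),f_{REL}(\gamma)\bigr),
\]
and since $len(\beta\,\square\,\gamma)\ge \max(len(\beta),len(\gamma))+1$, the induction hypothesis plus the inequality $1\le 2k$ yields the bound. For the test $\phi?$, one level of recursion invokes $MOD$ on the strict subformula $\phi$, so $f_{REL}(\phi?)\le 1 + f_{MOD}(\phi)\le 1 + 2\,len(\phi)\,k$, and again $len(\phi?)\ge len(\phi)+1$. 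The case of $f_{MOD}([\beta]\psi)$ is symmetric, using $f_{MOD}([\beta]\psi)\le 1+\max(f_{REL}(\beta),f_{MOD}(\psi))$.

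The most delicate case, and the only place where the tight factor $2k$ in the bound is actually used, is the Kleene star: for $\alpha=\beta^\star$, one level of $REL$-recursion calls $ITE$ on $\beta$, so $f_{REL}(\beta^\star)\le 1 + f_{ITE}(\beta)$. Here I invoke Claim~\ref{claim:claim1} to get $f_{ITE}(\beta)\le f_{REL}(\beta)+2k-1$, whence
\[
f_{REL}(\beta^\star) \;\le\; f_{REL}(\beta) + 2k \;\le\; 2\,len(\beta)\,k + 2k \;=\; 2\bigl(len(\beta)+1\bigr)k \;\le\; 2\,len(\beta^\star)\,k,
\]
using the induction hypothesis and $len(\beta^\star)\ge len(\beta)+1$. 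The obstacle here is precisely that the $2k$ slack coming from Claim~\ref{claim:claim1} has to be absorbed by the single extra symbol $\star$, which is why the bound uses the factor $2k$ and not merely $k$; once this accounting is set up correctly, the induction closes without further difficulty.
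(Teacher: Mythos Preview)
Your argument is correct and follows essentially the same route as the paper: a simultaneous induction on programs and formulas, with the star case handled exactly as in the paper via Claim~\ref{claim:claim1} so that the $2k$ overhead is absorbed by the single extra symbol of $\beta^\star$. One small inaccuracy: the order you cite from Lemma~\ref{lell} is defined on triples $(\alpha,\phi,d)$, whereas the paper introduces a separate product-style order $\lldot$ on pairs $(\alpha,\varphi)$ for this proof; your alternative of inducting on $len(\alpha)+len(\varphi)$ is perfectly adequate, so this does not affect correctness.
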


\begin{proof}
Let $\Pi$ be the property that holds for a pair $(\alpha,\varphi)$ iff $f_{REL}(\alpha)$ $\leq$ $2\times len(\alpha)\times k$ and $f_{MOD}(\varphi)$ $\leq$ $2\times len(\varphi)\times k$.
Let $\lldot$ be the binary relation that holds between pairs $(\alpha,\varphi)$ and $(\alpha^{\prime},\varphi^{\prime})$ iff either $len(\alpha)$ $<$ $len(\alpha^{\prime})$ and $len(\varphi)$ $\leq$ $len(\varphi^{\prime})$, or $len(\alpha)$ $\leq$ $len(\alpha^{\prime})$ and $len(\varphi)$ $<$ $len(\varphi^{\prime})$.
Remark that $\lldot$ is a well-founded order.
Let us demonstrate by $\lldot$-induction that $\Pi$ holds for all pairs $(\alpha, \varphi)$.
Let $(\alpha, \varphi)$ be such that for all $(\alpha^{\prime},\varphi^{\prime})$, if $(\alpha^{\prime},\varphi^{\prime})$ $\lldot$ $(\alpha,\varphi)$, then $\Pi$ holds for $(\alpha^{\prime}, \varphi^{\prime})$.
We only consider the following $2$ cases.
\\
{\bf Case $\alpha=\beta^{\star}$.}
Obviously, $f_{REL}(\beta^{\star})=f_{ITE}(\beta)+1$.
By Claim~\ref{claim:claim1}, $f_{ITE}(\beta)$ $\leq$ $f_{REL}(\beta)+2\times k-1$.
By induction hypothesis, $f_{REL}(\beta)$ $\leq$ $2\times len(\beta)\times k$.
Hence, $f_{REL}(\beta^{\star})$ $\leq$ $2\times(len(\beta)+1)\times k$ $\leq$ $2\times len(\beta^{\star})\times k$.
\\
{\bf Case $\varphi=\lbrack\beta\rbrack\phi$.}
Obviously, $f_{MOD}(\lbrack\beta\rbrack\phi)$ $\leq$ $\max\{f_{REL}(\beta),f_{MOD}(\phi)\}+1$.
By induction hypothesis, $f_{REL}(\beta)$ $\leq$ $2\times len(\beta)\times k$ and $f_{MOD}(\phi)$ $\leq$ $2\times len(\phi)\times k$.
Hence, $f_{MOD}(\lbrack\beta\rbrack\phi)$ $\leq$ $2\times\max\{len(\beta),len(\phi)\}\times k+1$ $\leq$ $2\times len(\lbrack\beta\rbrack\phi)\times k$.
\end{proof}

\medskip

Hence the maximal number of recursive calls between the deterministic Boolean functions $MOD$, $REL$ and $ITE$ has order linear in $k+len(\varphi)+len(\alpha)$.
Thus they can be implemented on deterministic Turing machines running in polynomial space.

This concludes the proof that our model checking algorithm works in polynomial space.

\section{Conclusion}
We have clarified the complexity of the model checking and the satisfiability problem of Dynamic Logic of Propositional Assignments (\DLPA) and of Coalition Logic of Propositional Control and Delegation \DCLPC.
First, we have explained why the proof of EXPTIME-hardness of the \DLPA model checking problem presented in \cite[Thm~$4$]{DBLP:conf/lics/BalbianiHT13} is erroneous. 
Second, although \DCLPC model checking is indeed in PSPACE, its proof in \cite[Thm.~$4$]{DBLP:journals/jair/HoekWW10} is flawed, and we have given a correct proof that the model checking and the satisfiability problem of both \DLPA and \DCLPC are in PSPACE.
All upper bounds are tight because the problem QSAT can be translated into the \DLPA model checking problem, as shown in \cite{DBLP:conf/ijcai/HerzigLMT11}.





\bibliographystyle{plain}
\bibliography{biblio-dlpa}

\newpage
\section*{Appendix}

\begin{verbatim}
--- In (Balbiani, Herzig, Troquard, 2013 LICS) a supposedly polynomial
--- reduction from the problem PEEK-G5 (Stockemeyer Chandra 1979) into
--- the model checking problem in the logic of DL-PA is proposed. If
--- the reduction were actually working, a similar reduction could be
--- done from PEEK-G5 into the model checking problem of CTL over
--- NuSMV models.


--- We consider here the Peek instance where Eloise controls ep1, and
--- Abelard controls ap1 and ap2. Abelard plays first (Tau = A). The
--- goal formula Phi for this instance is ep1 & ap1. The valuation to
--- start with is empty: ep1, ap1, and ap2 are set to false. Clearly,
--- if Abelard never assigns true to ap1, Phi can never become
--- true. So clearly, Eloise does not have a winning strategy. So,
--- were the reduction working, we should not find a counter-model
--- when evaluating the present file. But a counter-model is found. So
--- the reduction in (Balbiani, Herzig, Troquard, 2013 LICS) is wrong.




MODULE abelard(turn, Phi)

--- Abelard controls two variables ap1 and ap2, both initially set to
--- false. Abelard can non-deterministically choose which variable to
--- change before his turn, that is, when it is the turn of
--- eloise. This is done by setting vartochange-a to either 1 or
--- 2. Then Abelard can set ap1 (next(ap1)) to either true or false,
--- when vartochange = 1, it is his turn (turn = a), and Phi is not
--- true. Abelard can set ap2 (next(ap2)) to either true or false, when
--- vartochange = 2, it is his turn (turn = a), and Phi is not true.

 VAR
  vartochange-a : {1,2};
  ap1 : boolean;
  ap2 : boolean;
 ASSIGN
  init(vartochange-a) := {1,2};
  init(ap1) := FALSE;
  init(ap2) := FALSE;
  next(vartochange-a) :=  (!Phi & turn = e) ? {1,2}: vartochange-a;  
  next(ap1) := (!Phi & turn = a & vartochange-a = 1) ? {TRUE, FALSE} : ap1;
  next(ap2) := (!Phi & turn = a & vartochange-a = 2) ? {TRUE, FALSE} : ap2;


MODULE eloise(turn, Phi)

--- Eloise controls only one variable ep1. Its initial value is set to
--- false. Eloise can set the value of ep1 (next(ep1)) to either true
--- or false, whenever it is her turn (turn = e) and Phi is not
--- true. Since she controls only one variable, the control variable
--- vartochange-e is dummy, but is used for uniformity with the MODULE
--- abelard.

 VAR
  vartochange-e : {1};
  ep1 : boolean;
 ASSIGN
  init(ep1) := FALSE;
  next(ep1) := (!Phi & turn = e & vartochange-e = 1) ? {TRUE, FALSE} : ep1;
  

MODULE main
 VAR
  turn : {e,a};
  nowin : boolean;

--- We consider here a Peek instance where Eloise controls ep1, and
--- Abelard controls ap1 and ap2. The valuation to start with is
--- empty: ep1, ap1, and ap2 are all set to false. In other words, elo
--- is an instance of the module eloise, and abe is an instance of the
--- module abelard; both defined in this file.

  elo : eloise(turn, Phi);
  abe : abelard(turn, Phi);

 DEFINE

--- In the Peek instance we consider, Abelard plays first (Tau =
--- A). The objective formula Phi is ep1 & ap1.   

Phi := (elo.ep1) & (abe.ap2); 
Tau := a;

 ASSIGN

  init(turn) := Tau; 
  init(nowin) := TRUE;
  
  next(turn) := 
case
         (turn = e) : a;
	 (turn = a) : e;
esac;

  next(nowin) := Phi ? FALSE : nowin;

CTLSPEC   

--- This formula is an immediate translation of the DL-PA formula in
--- (Balbiani, Herzig, Troquard, 2013 LICS) into the language of CTL.

   AG (nowin  ->  (
!Phi                     & 
((turn = e) -> AX nowin) & 
((turn = a) -> EX nowin)   
                  )
      )
\end{verbatim}

\end{document}